\let\MYoriglatexcaption\caption
\renewcommand{\caption}[2][\relax]{\MYoriglatexcaption[#2]{#2}}
\DeclarePairedDelimiter\norm{\lVert}{\rVert}
\def\BibTeX{{\rm B\kern-.05em{\sc i\kern-.025em b}\kern-.08em
    T\kern-.1667em\lower.7ex\hbox{E}\kern-.125emX}}
\newtheorem{theorem}{Theorem}
\newtheorem{assumption}{Assumption}
\newtheorem{remark}{Remark}
\begin{document}
%
\title{Composite Adaptive Control for Time-varying Systems with Dual Adaptation}
%
%
%

\author{Raghavv Goel
        and Sayan Basu Roy, Member, IEEE
\thanks{
Raghavv Goel is with the Robotics Institute, Carnegie Mellon University, USA, 15213; e-mail: raghavvg@andrew.cmu.edu
\\
Sayan Basu Roy is with the Department
of Electronics and Communication Engineering, Indraprastha Institute of Technology Delhi, 110020, India  e-mail: sayan@iiitd.ac.in}
}

\maketitle

\begin{abstract}
This paper proposes a composite adaptive control architecture using dual adaptation scheme for dynamical systems comprising time-varying uncertain parameters. While majority of the adaptive control schemes in literature address the case of constant parameters, recent research has conceptualized improved adaptive control techniques for time-varying systems  with rigorous stability proofs. The proposed work is an effort towards a similar direction, where a novel dual adaptation mechanism is introduced to efficiently tackle the time-varying nature of the parameters. Projection and $\sigma$-modification algorithms are strategically combined using congelation of variables to claim a global result for the tracking error space. While the classical adaptive systems demand a restrictive condition of persistence of excitation (PE) for accurate parameter estimation, the proposed work relies on a milder condition, called initial excitation (IE) for the same. A rigorous Lyapunov stability analysis is carried out to establish uniformly ultimately bounded (UUB) stability of the closed-loop system. Further it is analytically shown that the proposed work can recover the performance of previously designed IE-based adaptive controller in case of time invariant systems.
\end{abstract}

\begin{IEEEkeywords}
adaptive systems, time-varying system, composite adaptive control, persistence of excitation, initial excitation
\end{IEEEkeywords}

%
\IEEEpeerreviewmaketitle

\section{Introduction}
%
%
%
%
\IEEEPARstart{A}{daptive} control is a powerful nonlinear dynamic control technique, which can tackle parametric uncertainty in real-time \cite{narendra2012stable, ioannou1996robust}.  Adaptive controllers ensure closed-loop stability of the extended error dynamics involving tracking error and parameter estimation error. While asymptotic tracking error convergence can be claimed by invoking Barbalat's lemma, parameter convergence demands an additional restrictive condition, called persistence of excitation (PE), on the regressor signal.

Majority of the developments in adaptive control literature consider constant unknown parameters to establish well-behaved closed-loop error dynamics. Compared to the mammoth parameter estimation literature for constant parameters, the time-varying parameter estimation literature shies away. The stability analysis for the case of unknown time-varying parameters is challenging due to the appearance of an undesirable parameter derivative term in the Lyapunov analysis (for details see the introduction of \cite{Dixon}). One approach for bounding the undesirable term is to use robust adaptive control techniques like sigma-modification \cite{lavretsky2013robust}, projection \cite{lavretsky2011projection} etc, especially for slow-varying parameters. However, recent literature has provided promising results in adaptive control for time-varying systems \cite{chen2020adaptive2, Dixon}, where robust damping and/or sliding-mode like mechanisms are strategically utilized to ensure improved performance. Further the work in \cite{arabi2019set} uses barrier Lyapunov function to invoke safety bounds on the tracking error in the context of time-varying parameters.\par 

Most of the above mentioned frameworks/techniques have only proved tracking  error convergence, while  parameter estimation error convergence requires the PE condition on the state/reference input \cite{narendra2012stable}. It has been well-established in literature that the PE condition has difficulty to verify and/or satisfy in practical problems. Composite adaptive control techniques \cite{slotine1989composite, lavretsky_composite} provide a way to improve the parameter estimation algorithm by incorporating prediction error (partial information about parameter estimation error) in addition to tracking error. However, these techniques still require the PE condition for parameter convergence. Research efforts are made in recent past to relax the PE condition in various ways, such as data-driven \cite{chowdhary2013concurrent, kamalapurkar2017concurrent, parikh2019integral}, filter-based \cite{ortega2020new, roy2017combined} methods. The works in \cite{roy2016parameter, roy2017combined, roy2017uges, roy2019robustness} have devised a condition, called Initial Excitation (IE), which is shown to be sufficient for parameter convergence using a two-tier filter-based adaptive controller. The IE condition is milder than the classical PE condition since it requires the excitation to sustain only in initial time-window as compared to PE demanding the excitation to sustain for all time. 

Unlike the above mentioned IE-based algorithms, which are proved to be efficient for systems with constant unknown parameters, the work in \cite{gaudio2021parameter} has devised a novel adaptive controller for time-varying systems while ensuring parameter convergence (to an ultimate-bound) under IE condition. However, this algorithm cannot ensure restoring the asymptotic tracking performance in the case of constant parameters. 


Taking inspiration from the recent works \cite{chen2018adaptive, chen2020adaptive2}, this paper utilises the concept of \textit{congelation of variables}, which splits the unknown parameter vector/matrix into a nominal component (constant) and a perturbation component (time-varying). A composite adaptive control architecture for time-varying systems is developed using a dual adaptation scheme. The dual adaptation mechanism comprises 1) a primary estimator for estimating the total unknown parameter vector/matrix  and 2) a secondary estimator dedicated for the constant nominal component. Projection and $\sigma$-modification algorithms are strategically combined in the adaptation mechanism to claim a global result for the tracking error space. The secondary parameter estimator utilises the notion of IE condition for efficient learning of the nominal component of the parameter. A rigorous Lyapunov stability analysis is carried out to establish uniformly ultimately bounded (UUB) stability of the closed-loop system. It is analytically proved that the proposed algorithm can recover the performance of previously designed IE-based adaptive controllers \cite{roy2017combined, roy2017uges} in case of plants involving constant parameters only. In a nutshell, the paper has the following contributions. \par 
\begin{itemize}
    \item A novel dual adaptation mechanism utilising congelation of variables for uncertain time-varying systems.
    \item Strategically combining $\sigma$-mod and projection to ensure global tracking (in the tracking error space), unlike a semi-global result in \cite{pan2018composite}.
    \item Extension of the IE-based adaptation strategy for time-varying system, while attaining performance recovery for the time-invariant case, unlike \cite{gaudio2021parameter}. 
\end{itemize}




\section{Preliminaries}
\label{sec:prelim}
Some of the notations and definitions used throughout the paper are stated. For a vector $a$, $\norm{a}$ denotes the Euclidean norm. For a $n\times n$ matrix $A$,   $\norm{A}_{F}$ denotes the Frobenius norm and $Tr(A)$ denotes the trace of $A$. $\mathbf{1}_{n},\mathbf{0}_{n} \in \mathbb{R}^{n}$ are column vectors having all entries as $1$ and $0$, respectively; $I_{n}$ is the identity matrix of dimension $n\times n$.  $\mathbf{0}_{n \times m}, \mathbf{1}_{n \times m} \in \mathbb{R}^{n\times m}$ are matrices with all entries as $0$ and $1$ respectively. $\mathbb{S}_{n}^{+}$ is the set of symmetric positive-definite matrices of size $n\times n$.


\section{System Description}
\label{sec:sytem}
Consider the following dynamical system \cite{Yucelen_acc}
\begin{align}
    \dot{x}(t) = Ax(t) + B\big(u(t) + W^{T}(t)\phi(x(t))\big)
    \label{eq:x_dot}
\end{align}
where, $A \in \mathbb{R}^{n\times n}, B \in \mathbb{R}^{n\times n_{u}}$ are the system matrices, $x(t) \in \mathbb{R}^{n}$ is the system state, $u(t) \in \mathbb{R}^{n_{u}}$ is the control input, $W(t) \in \mathbb{R}^{n_{w} \times n_{u}}$ is the unknown time-varying parameter and $\phi(x(t)) \in \mathbb{R}^{n_{w}}$ is a known regressor and function of the state. 

To characterize the desired response a reference model is designed below.
\begin{align}
    \dot{x}_{m}(t) = A_{m}x_{m}(t) + B_{m}r(t)
    \label{eq:reference_model}
\end{align}
where, $x_{m}(t) \in \mathbb{R}^{n}$ is the reference model state, $A_{m} \in \mathbb{R}^{n \times n}, B_{m} \in \mathbb{R}^{n \times n_{r}}$ are reference model matrices. The matrix $A_m$ is designed to be Hurwitz to ensure bounded-input-bounded-output (BIBO) stability of (\ref{eq:reference_model}) with respect to the piecewise-continuous external reference input $r(t) \in \mathbb{R}^{n_{r}}$.

\section{Control Objective and Assumptions}
\label{sec:main}
The objective is to design a control law $u(t)$ and parameter update law $\dot{\hat{W}}(t)$ such that the closed-loop error dynamics including tracking error $e(t)\triangleq x(t)-x_m(t)$ and parameter estimation error $\tilde{W}(t)\triangleq\hat{W}(t)-W(t)$ remain uniformly ultimately bounded (UUB).

The following assumptions are made to facilitate the design.
\begin{assumption}
System matrix $A$ and $B$ have the following matching conditions: $A=A_{m}-BK_{x}^{T}$ and $B_{m} = BK_{r}^{T}$, where, $K_{x} \in \mathbb{R}^{n \times n_{u}}$ and $K_{r} \in \mathbb{R}^{n_{r} \times n_{u}}$ are called controller parameters.
 \label{ass:matching_condition}
\end{assumption}
The above is a standard assumption in MRAC literature \cite{narendra2012stable}, which ensures structural similarity between the plant and the reference model. 
Further based on congelation of variables method in \cite{chen2018adaptive}, the time-varying parameter $W(t)$ can be decomposed as
\begin{align}
    W(t) = W^{*} + \delta_{W}(t)
    \label{eq:Wt_congelation}
\end{align}
where $W^*$ is the constant nominal component of the parameter and $\delta_{W}(t)$ is the perturbation component (deviation around the nominal).
\begin{assumption}
The time-varying parameter can be norm-bounded by known constants as: $\norm{W^{*}} \leq \overline{W}$, $\norm{\delta_{W}(t)} \leq \Bar{\delta}_{W}$ and $\norm{\dot{\delta}_{W}(t)} \leq \Bar{\dot{\delta}}_{W}$.
\label{ass:W_bound}
\end{assumption}


\section{Adaptive Controller Design}
\subsection{Control Input}
The control input is designed as follows.
\begin{align}
    u(t) = K_{x}^{T}x(t) + K_{r}^{T}r(t) - \underbrace{\hat{W}^{T}(t)\phi(x(t))}_{u_{ad}(t)}
    \label{eq:controller}
\end{align}
where $u_{ad}(t)$ is the adaptive component of the controller. We assume known $K_{x}$ and $K_{r}$, as estimation and convergence of these parameters are not the focus of current paper and can be handled along the lines of \cite{roy2017combined}. We thus emphasise only on the estimation of the unknown time-varying parameter ($W(t)$).

Using \eqref{eq:x_dot}, \eqref{eq:reference_model} and \eqref{eq:controller}, the closed-loop error dynamics is given as
\begin{align}
\begin{split}
    \dot{e}(t) =& A_{m}e(t) - B\Tilde{W}^{T}(t)\phi(x(t)) 
\end{split}
    \label{eq:e_dot_closed_loop}
\end{align}

The parameter update law is subsequently proposed using a dual adaptation mechanism - 
\begin{itemize}
    \item a Primary Estimate ($\hat{W}(t)\in\mathbb{R}^{n_w\times n_u}$) for the total time varying parameter ($W(t)$)
    \item a Secondary Estimate ($\hat{W}^{*}(t)\in\mathbb{R}^{n_w\times n_u}$) for the nominal component of the parameter ($W^{*}$)
\end{itemize} 
\subsection{Primary Parameter Estimator $\hat{W}(t)$}
The parameter estimate for $W(t)$ is designed using gamma-Projection operation (Definition (11) of \cite{lavretsky2011projection}) which is  denoted as $\text{Proj}_{\Gamma_{W}}(\hat{W}(t),y(t),f(t))$, where
\begin{align}
    f(t) =& \frac{Tr(\hat{W}^{T}(t)\hat{W}(t)) - \alpha^{2}}{2\alpha\epsilon + \epsilon^{2}} \label{eq:w_hat_convex_fn}
    \\
    y(t) =& \phi(x(t))e^{T}(t)PB - \sigma(\hat{W}(t)-\hat{W}^{*}(t))
    \label{eq:y}
\end{align}
Here, $f(t) \in \mathbb{R}$ is a convex function,  $\alpha^{2} \triangleq \overline{W}^{2} + \Bar{\delta}_{W}^{2}$ is from Assumption \ref{ass:W_bound} and $\epsilon \in \mathbb{R}_{>0}$ is a continuity parameter ensuring Lipchitz continuity. $\Gamma_{W} \in \mathbb{S}_{n_{w}}^{+}$ is the adaptation gain matrix and $\sigma \in \mathbb{R}_{>0}$ is a scalar tuner similar to sigma-mod \cite{lavretsky2013robust}; $y(t) \in \mathbb{R}^{n_{w} \times n_{u}}$ and $P\in\mathbb{R}^{n\times n}$ is a positive definite solution of the Lyapunov equation
\begin{equation}
    A_m^TP+PA_m=-Q_m
\end{equation}
where $Q_m>0$ is a chosen positive definite matrix. Further $\hat{W}^*(t)$ is the nominal parameter estimate designed subsequently. Hence, 
\begin{subnumcases}{\dot{\hat{W}}(t)=}
  \Gamma y(t) - \Gamma\frac{(\nabla f(t))(\nabla f(t))^{T}}{Tr\big((\nabla f(t))^{T}\Gamma(\nabla f(t))\big)}\Gamma y(t)f(t) \label{eq:proj1}
  \\
  \Gamma y(t)\label{eq:proj2}
\end{subnumcases}  
\begin{align}
 =&\text{Proj}_{\Gamma_{W}}(\hat{W}(t),y(t),f(t))
 \label{eq:W_hat_dot}
\end{align}
where, (\ref{eq:proj1}) occurs if $f(t)>0 \wedge Tr\big(y^{T}(t)\Gamma(\nabla f(t)\big)>0$, otherwise (\ref{eq:proj2}) occurs.

\begin{remark}
The design in \eqref{eq:y} is a novel concept, where the second term includes $\sigma$-modification ($\sigma$ mod) while pulling the primary estimate $\hat{W}(t)$ towards the secondary estimate $\hat{W}^*(t)$. Hence, this update law can obviate the drawback of unlearning that exists in traditional $\sigma$-mod, provided that the secondary estimate approaches the nominal parameter $W^*$. \end{remark}

\subsection{Secondary Parameter Estimator $\hat{W}^{*}(t)$}
The design of update law for $\hat{W}^{*}(t)$ is another novel contribution of the paper. The design is motivated from the recent literature on IE-based adaptive control \cite{roy2017combined, roy2017uges}, which builds on two-tier filter architecture while ensuring parameter convergence without the restrictive PE condition on the regressor.  The two-tier filtering scheme is subsequently adopted for the secondary estimate while suitably modifying the scheme in the context of time-varying parameter setting. 


Define the secondary parameter estimator as 
\begin{align}
    \Tilde{W}^{*}(t) = \hat{W}^{*}(t) - W^{*}
    \label{eq:lw_tilde}
\end{align}

\subsubsection{First-layer Filtering}
Exploiting the idea of congelation of variables \cite{chen2018adaptive}, we define a series of filters to extract information about the nominal component of the unknown parameter.
\begin{align}
    &\dot{g}(t)= -p_{f}g(t) + \dot{e}(t),\ g(t_{0}) = \mathbf{0}_{n} 
    \label{eq:g_dot}
    \\
    &\dot{e}_{f}(t) = -p_{f}e_{f}(t) + e(t),\  e_{f}(t_{0}) = \mathbf{0}_{n}
    \label{eq:ef_dot}
    \\
    &\dot{u}_{f}(t) = -p_{f} u_{f}(t) + u_{ad}(t),\  u_{f}(t_{0}) = \mathbf{0}_{n_{u}}
    \label{eq:uf_dot}
    \\
    &\dot{\phi}_{f}(t) = -p_{f}\phi_{f}(t) + \phi(x(t)),\  \phi_{f}(x(t_{0})) = \mathbf{0}_{n_{w}} 
    \label{eq:phi_f_dot}
\end{align}
where, $p_{f} \in \mathbb{R}_{>0}$ determines the weight given to past trajectories. $g(t) \in \mathbb{R}^{n}$ is the filtered-tracking error derivative, $e_{f}(t) \in \mathbb{R}^{n}, u_{f}(t) \in \mathbb{R}^{n_{u}}$ and $\phi_{f}(t) \in \mathbb{R}^{n_{w}}$ are the filtered-tracking error, filtered-control input and filtered-regressor respectively.
Note that the time derivative of trajectory error ($\dot{e}(t)$) is not available. Therefore, 
$g(t)$ is calculated using Integration By-parts where no information of $\dot{e}(t)$ is required and is given as
\begin{align}
    g(t) = e(t) - exp\{-p_{f}(t-t_{0})\}e(t_{0}) - p_{f}e_{f}(t)
    \label{eq:g}
\end{align}

The equation \eqref{eq:e_dot_closed_loop} is re-written as
\begin{align}
\begin{split}
    W^{*T}\phi(x(t)) =& \underbrace{(B^{T}B)^{-1}B}_{\Bar{B}}\big(\dot{e}(t) - A_{m}e(t)\big)
    \\
    & - \delta_{W}^{T}(t)\phi(x(t)) + \underbrace{\hat{W}^{T}(t)\phi(x(t))}_{u_{ad}}
\end{split}
\label{eq:e_dot_filter1}
\end{align}

Substituting \eqref{eq:ef_dot}-\eqref{eq:g} in \eqref{eq:e_dot_filter1}, we get
\begin{align}
    W^{*T}\phi_{f}(t) + \Delta_{f}(t) = \underbrace{\Bar{B}(g(t)-A_{m}e_{f}(t))}_{h(t)} + u_{f}(t) 
    \label{eq:first_layer_filtered_W_star}
\end{align}
where, the term corresponding to time-varying component of the parameter satisfies the following dynamics.
\begin{align}
    \dot{\Delta}_{f}(t) &= -p_{f}\Delta_{f}(t) + \delta_{W}^{T}(t)\phi(x(t)) ,\, \Delta_{f}(t_{0}) = \mathbf{0}_{n_{u}}      
    \label{eq:Delta_dot}    
\end{align}
Here $\Delta_{f}(t) \in \mathbb{R}^{n_{u}}$ is the filtered effect of the parametric perturbation $\delta_W(t)$.

It can be observed that the first-layer filtering provides an algebraic relation \eqref{eq:first_layer_filtered_W_star} involving $W^*$ as compared to the differential relation \eqref{eq:e_dot_closed_loop} having unmeasurable quantity $\dot{e}(t)$. Hence, relation \eqref{eq:first_layer_filtered_W_star} can be utilised to design a composite adaptive controller. However, we further define another layer of filter to exploit the benefit of the IE condition \cite{roy2016parameter}.
\subsubsection{Second-layer Filtering}
Consider the following filter-dynamics, which take outer-product of first-layer filter outputs as inputs.  
\begin{align}
\begin{split}
        \dot{\Phi}_{ff}(t) =& -p_{ff}\Phi_{ff}(t) + \phi_{f}(t)\phi_{f}^{T}(t),
        \\
        & \Phi_{ff}(t_{0}) = \mathbf{0}_{n \times n}
\end{split}
    \label{eq:M_dot}
    \\
    \begin{split}
    \dot{u}_{ff}(t) =& -p_{ff}u_{ff}(t) + (h(t)+u_{f}(t))\phi_{f}^{T}(t), 
    \\
    & u_{ff}(t_{0}) = \mathbf{0}_{n_{u} \times n} 
    \end{split} \label{eq:G_dot}
\end{align}
where $\Phi_{ff}(t)\in\mathbb{R}^{n_{w} \times n_{w}}$ and $u_{ff}(t) \in \mathbb{R}^{n_{u} \times n_{w}}$ are the double filtered regressor and double filtered control input respectively.

Using (\ref{eq:first_layer_filtered_W_star}), \eqref{eq:M_dot} and \eqref{eq:G_dot}, it can be shown that 
\begin{align}
    u_{ff}(t) = & W^{*T}\Phi_{ff}(t) + \Delta_{ff}(t)
    \label{eq:second_layer_filter_relation}
\end{align}
where,
\begin{align}
\begin{split}
    \dot{\Delta}_{ff}(t) = & -p_{ff}\Delta_{ff}(t) + \Delta_{f}(t)\phi_{f}^{T}(t),
    \\
    & \Delta_{ff}(t_{0})=\mathbf{0}_{n_{u} \times n_{w}}
\end{split}
    \label{eq:delta_ff}
\end{align}
Here, $\Delta_{ff}(t) \in \mathbb{R}^{n_{u} \times n_{w}}$ is the double-filtered effect of the parametric perturbation $\delta_{w}(t)$.

\subsubsection{Initial Excitation}
Consider the following IE assumption on the filtered-regressor. 
\begin{assumption} \label{ass:IE}
The filtered-regressor $\phi_{f}(x(t))$ is uniformly initially exciting (u-IE) with respect to dynamics in \eqref{eq:x_dot}, filters in \eqref{eq:ef_dot}, \eqref{eq:uf_dot}, \eqref{eq:phi_f_dot}, \eqref{eq:M_dot}, \eqref{eq:G_dot} and \eqref{eq:y_star_CIE} with time-window $T_{IE}$ and degree of excitation $\gamma_{IE}$, i.e., $\exists\ \gamma_{IE}>0, T_{IE}>0$ such that 
\begin{align}
    \int_{t_0}^{t_0+T_{IE}}\phi_f(\tau)\phi_f^T(\tau)d\tau\geq \gamma_{IE}I_{n_w}
\end{align}
where $I_{n_w}$ is the identity matrix of dimension $n_{w}$.
\end{assumption}
\begin{remark}
The definition of IE condition \cite{roy2017uges,roy2019robustness} has a crucial difference with the definition of PE condition. In PE condition, a similar integral inequality has to be satisfied for $[t,t+T_{PE}]$, $\forall t\in[t_0,\infty)$, i.e., the excitation has to persist for all future time. Unlike PE, the IE condition demands the integral inequality only for the initial time-window $[t_0,t_0+T_{IE}]$. The IE condition is milder than PE since there is no need for the excitation to persist beyond initial time-window. 
\end{remark}
Dynamics of $\hat{W}^{*}(t)$ is designed using the gamma-Projection operation similar to \eqref{eq:W_hat_dot}, while incorporating an IE-based component. 
\begin{align}
    \dot{\hat{W}}^{*}(t) = \text{Proj}_{\Gamma_{W^{*}}}(\hat{W}^{*}(t),y^{*}(t),f^{*}(t))
    \label{eq:W_hat_star_dot}
\end{align}
where, 
\begin{align}
        y^{*}(t) =& \gamma_{1}C_{l}(t) + \gamma_{2}C_{ll}(t) + \gamma_{3}s(t)C_{IE}(t)
        \label{eq:y_star}
        \\
        f^{*}(t) =& \frac{Tr(\hat{W}^{*T}(t)\hat{W}^{*}(t))-\alpha^{*2}}{{2\alpha^{*}\epsilon^{*} + \epsilon^{*2}}}
\end{align}
where, $\alpha^{*} \triangleq \overline{W}$ is the known upper bound of $W^{*}$ from Assumption \ref{ass:W_bound} and $\epsilon^{*} \in \mathbb{R}_{>0}$ is a continuity parameter ensuring Lipchitz condition. $\Gamma_{W^{*}} \in \mathbb{S}_{n_{w}}^{+}$ is the adaptation gain matrix and $\gamma_{1}, \gamma_{2}, \gamma_{3} \in \mathbb{R}_{>0}$ are parameter tuning scalars for individual filter terms.

where,    
\begin{align}
    &C_{l}(t) = -\phi_{f}(x(t))\big(\hat{W}^{*T}(t)\phi_{f}(x(t)) - (h + u_{f}) \big)^{T}\\
    &C_{ll}(t) = -\big(\hat{W}^{*T}(t)\Phi_{ff}(t) - u_{ff}(t)\big)^{T}
    \\
    &C_{IE}(t) = - \big(\hat{W}^{*T}(t)\Phi_{ff}(T) - u_{ff}(T)\big)^{T}
\end{align}

where, $T\triangleq t_0+T_{IE}$; the switching signal $s(t) = 0$ if $t \in [t_{0}, t_{0} + T_{IE})$, otherwise $1$ and 
$y^{*}(t) \in \mathbb{R}^{n_{w}\times n_{u}}$.  
\begin{remark}
It is proved in \cite{roy2017uges,roy2019robustness} that the IE condition can be verified online by checking the minimum eigen-value of $\phi_{ff}(t)$. Hence, the above designed parameter estimator is online implementable. 
\end{remark}

\section{Stability Analysis}
\subsection{Ultimate Boundedness of $e(t)$ and $\tilde{W}(t)$}
\begin{theorem}
Using the system model in \eqref{eq:x_dot}, control design in \eqref{eq:controller}, parameter update laws in \eqref{eq:W_hat_dot} and \eqref{eq:W_hat_star_dot} and Assumptions \ref{ass:matching_condition}-\ref{ass:W_bound}. The system is uniformly ultimately bounded (UUB) in the extended state space of $[e^{T}(t) \, \Tilde{W}^{T}(t)]$.
\label{theorem:1}
\end{theorem}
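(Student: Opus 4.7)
The plan is to build a composite Lyapunov function
\[
V = e^T P e + Tr\bigl(\tilde W^T\Gamma_W^{-1}\tilde W\bigr) + \kappa\, Tr\bigl(\tilde W^{*T}\Gamma_{W^*}^{-1}\tilde W^*\bigr),
\]
in which the weight $\kappa>0$ is a free tuning knob that will be fixed at the end so that the IE-generated decay of $\tilde W^*$ dominates the cross coupling introduced by the $\sigma$-modification. Differentiating $V$ along \eqref{eq:e_dot_closed_loop}, using $\dot{\tilde W}=\dot{\hat W}-\dot\delta_W$ (since $W^*$ is constant) together with the Lyapunov equation $A_m^T P + PA_m = -Q_m$, and then invoking the standard $\gamma$-projection inequality from \cite{lavretsky2011projection}, namely $Tr(\tilde W^T[\Gamma_W^{-1}\dot{\hat W} - y])\le 0$ (and the analogous inequality for $\hat W^*$), upper-bounds the adaptation contributions by $2\, Tr(\tilde W^T y) + 2\kappa\, Tr(\tilde W^{*T} y^*)$. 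The regressor--tracking-error term $\phi e^T P B$ built into \eqref{eq:y} then cancels the off-diagonal $-2e^T P B \tilde W^T \phi$ exactly; this planned cancellation is precisely the reason the chosen $y(t)$ decouples the tracking and parameter subsystems in the Lyapunov sense.

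What remains from the primary estimator is the $\sigma$-modification piece, and the congelation identity $\hat W - \hat W^* = \tilde W - \tilde W^* + \delta_W$ rewrites it as $-2\sigma \|\tilde W\|_F^2 + 2\sigma\, Tr(\tilde W^T \tilde W^*) - 2\sigma\, Tr(\tilde W^T \delta_W)$. Applying weighted Young's inequalities to both cross terms and to the remaining drift $-2\, Tr(\tilde W^T \Gamma_W^{-1} \dot\delta_W)$---with the weights tuned so that the coefficient of $\|\tilde W\|_F^2$ stays strictly negative---reduces the first block to a bound of the form $-c_1 \|\tilde W\|_F^2 + c_2 \|\tilde W^*\|_F^2 + \text{const}$, in which the constant absorbs $\bar\delta_W$ and $\bar{\dot\delta}_W$ from Assumption~\ref{ass:W_bound}.

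For the secondary block I will substitute $y^*$ from \eqref{eq:y_star} and use the filtered identities: \eqref{eq:first_layer_filtered_W_star} gives $C_l = -\phi_f \phi_f^T \tilde W^* + \phi_f \Delta_f^T$, \eqref{eq:second_layer_filter_relation} gives $C_{ll} = -\Phi_{ff} \tilde W^* + \Delta_{ff}^T$, and the same identity evaluated at $T = t_0 + T_{IE}$ gives $C_{IE} = -\Phi_{ff}(T) \tilde W^* + \Delta_{ff}^T(T)$. The first two quadratic pieces are negative semidefinite, and their cross terms are absorbed by the uniform boundedness of $\Delta_f$, $\Delta_{ff}$ (from the exponentially stable filters \eqref{eq:Delta_dot}--\eqref{eq:delta_ff} driven by the bounded signal $\delta_W^T \phi$, together with the projection-induced boundedness of $\hat W^*$). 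The decisive ingredient is $C_{IE}$: Assumption~\ref{ass:IE} combined with the variation-of-constants representation of \eqref{eq:M_dot} yields $\Phi_{ff}(T) \succeq \gamma_{IE} e^{-p_{ff} T_{IE}} I_{n_w}$, and a final Young's split on the $\Delta_{ff}^T(T)$ cross term extracts a strictly negative $-\tfrac{1}{2}\kappa\gamma_3 \gamma_{IE} e^{-p_{ff}T_{IE}} \|\tilde W^*\|_F^2$ out of $2\kappa\gamma_3\, Tr(\tilde W^{*T} C_{IE})$ up to an additive constant.

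Collecting everything, for $t\ge T$ the bound takes the form $\dot V \le -\lambda_{\min}(Q_m)\|e\|^2 - c_1 \|\tilde W\|_F^2 - \bigl(\tfrac{1}{2}\kappa\gamma_3 \gamma_{IE} e^{-p_{ff}T_{IE}} - c_2\bigr)\|\tilde W^*\|_F^2 + c$, so choosing $\kappa$ large enough makes $\dot V \le -\alpha V + c$ and yields UUB by the comparison lemma. On the pre-IE interval $[t_0,T)$ the switch $s=0$ removes the IE term, but the projection operator keeps $\hat W$ and $\hat W^*$ in a priori known compact sets by construction, and together with the Hurwitz $A_m$ this prevents finite-time escape of $e$ and makes $V(T)$ finite, providing a well-defined initial condition for the post-$T$ exponential-to-a-ball dynamics. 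The main obstacle will be the filter bookkeeping in the third paragraph: carrying the two filter layers carefully enough to guarantee that $\Delta_f$ and $\Delta_{ff}$ enter only as additive bounded residuals, and simultaneously tuning the several Young's weights together with the scalar $\kappa$ so that the IE-generated decay rate $\gamma_{IE} e^{-p_{ff} T_{IE}}$---which is exponentially small in the filter constant $p_{ff}$---still dominates every positive contribution introduced along the way.
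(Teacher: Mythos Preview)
Your approach is genuinely different from the paper's and carries a real gap. The paper does \emph{not} include $\tilde W^{*}$ in the Lyapunov function for Theorem~\ref{theorem:1}; it uses only $V=e^{T}Pe+Tr(\tilde W^{T}\Gamma_{W}^{-1}\tilde W)$. The insight you are missing is that the projection on $\hat W^{*}$ already gives the a~priori bound $\lVert\hat W^{*}(t)\rVert\le\alpha^{*}+\epsilon^{*}$, so after the exact cancellation of the $\phi e^{T}PB$ terms the $\sigma$-mod contribution is rewritten as $-2\sigma\lVert\tilde W\rVert_{F}^{2}-2\sigma\,Tr\bigl(\tilde W^{T}(W-\hat W^{*})\bigr)$, where $W-\hat W^{*}$ is bounded by a \emph{known constant independent of the state}. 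Together with $\lVert\dot\delta_{W}\rVert\le\bar{\dot\delta}_{W}$ this yields $\dot V\le -(\beta_{1}/\beta_{2})V+c_{W}$ directly, with no regressor-dependent constants and no IE. The analysis of $\tilde W^{*}$ is deferred to a separate Theorem~2, where the bound $\lVert\phi(x(t))\rVert\le\bar\phi$ is legitimate because Theorem~\ref{theorem:1} has already delivered $x\in\mathcal L_{\infty}$.

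Your composite construction breaks this decoupling and creates a circularity: you assert that $\Delta_{f},\Delta_{ff}$ are uniformly bounded because the filters are driven by ``the bounded signal $\delta_{W}^{T}\phi$'', but $\phi=\phi(x(t))$ and the boundedness of $x$ is exactly what you are trying to establish. In the paper's argument $\phi$ cancels completely and never enters the residual $c_{W}$; in yours the additive constant $c$ depends on $\bar\phi$, so the inequality $\dot V\le -\alpha V+c$ is not available until after the conclusion is known. A secondary issue is that your proof invokes Assumption~\ref{ass:IE} to make the $\tilde W^{*}$ block strictly negative, whereas Theorem~\ref{theorem:1} as stated relies only on Assumptions~\ref{ass:matching_condition}--\ref{ass:W_bound}; you would therefore be proving a strictly weaker statement even if the circularity were repaired.
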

\begin{proof}
The Lyapunov candidate is defined as
\begin{align}
    V(e(t),\Tilde{W}(t)) = e^{T}(t)Pe(t) + Tr\big(\Tilde{W}^{T}(t)\Gamma_{W}^{-1}\Tilde{W}(t)\big)
    \label{eq:V_candidate}
\end{align}

Taking derivative along system trajectories and specifying explicit time dependence wherever necessary, we get
\begin{align}
    \dot{V} =& e^{T}(t)(A_{m}^{T}P + PA_{m})e(t) - 2e^{T}(t)PB\Tilde{W}^{T}(t)\phi(x(t)) \nonumber    
    \\
    & + 2Tr\Big(\Tilde{W}^{T}(t)\big(\phi(x(t))e^{T}(t)PB - \sigma(\hat{W}(t)-\hat{W}^{*}(t))\big) \nonumber
    \\
    &  \underbrace{-\Tilde{W}^{T}(t)\big(\Gamma_{W}^{-1}\text{Proj}_{\Gamma_{W}}(\Tilde{W},y,f) - y(t)\big)}_{\leq 0 \text{\cite{lavretsky2011projection}}} \Big) \nonumber
    \\
    & - 2Tr\big(\Tilde{W}^{T}(t)\Gamma_{W}^{-1}\dot{W}(t)\big) \nonumber
\end{align}
After cancelling like terms, bounding Projection term and using $\dot{W}(t) = \dot{\delta}_{W}(t)$ from \eqref{eq:Wt_congelation}, we get
\begin{align}
    \dot{V}\leq & e^{T}(t)(A_{m}^{T}P + PA_{m})e(t) - 2\sigma Tr\big(\Tilde{W}^{T}(t) (\hat{W}(t)- \nonumber
    \\
    & \hat{W}^{*}(t))\big) - 2Tr\big(\Tilde{W}^{T}(t)\Gamma_{W}^{-1}\dot{\delta}_{W}(t)\big) \nonumber
    \\
    \begin{split}
    = & -e^{T}(t)Q_{m}e(t) - 2\sigma Tr\big(\Tilde{W}^{T}(t)\Tilde{W}(t) +  
    \\
    & \Tilde{W}^{T}(t)(W(t) - \hat{W}^{*}(t)-\Gamma_{W}^{-1}\dot{\delta}_{W}(t))\big)         
    \end{split} \label{eq:lw_tilde_bound}
    \\
    \leq & -\lambda_{\min}(Q_{m})\norm{e(t)}_{2}^{2} - 2\sigma Tr\big(\Tilde{W}^{T}(t)\Tilde{W}(t)\big) +  \nonumber
    \\
    & \sigma\norm{\Tilde{W}}_{F}^{2} + \sigma\norm{W_{r}(t)}_{F}^{2} \nonumber
    \\
    \leq & -\lambda_{\min}(Q_{m})\norm{e(t)}_{2}^{2} - \sigma \norm{\Tilde{W}(t)}_{F}^{2} +  \sigma\norm{W_{r}(t)}_{F}^{2} \nonumber 
    \\
    \leq & -\underbrace{\min(\lambda_{\min}(Q_{m}), \sigma)}_{\beta_{1}}\big(\norm{e(t)}^{2}+ \norm{\Tilde{W}(t)}_{F}^{2}\big) + c_{W} \label{eq:V_dot_part1}
\end{align}

where, $W_{r}(t)=W(t)-\hat{W}^{*}(t) -\Gamma_{W}^{-1}\dot{\delta}_{w}(t)$,  here all individual terms in $W_{r}(t)$ can be upper-bounded from Assumption \ref{ass:W_bound} and use of projection operator implying that $c_{W}$ is  finite positive constant: $\sigma \norm{W_{r}(t)}_{F}^{2} \leq \sigma(2\overline{W} + \bar{\delta}_{W} + \lambda_{\min}(\Gamma_{W})\Bar{\dot{\delta}}_{W} + \epsilon^{*}) = c_{W}$; further $\lambda_{\min}(Q_m)>0$ is the minimum eigenvalue of the matrix. 

From \eqref{eq:V_candidate}, the LHS can be bounded as 
\begin{align}
\begin{split}
        V \leq \beta_{2}\big(\norm{e(t)}^{2} + \norm{\Tilde{W}(t)}_{F}^{2}\big)
        \\
        \implies -\big(\norm{e(t)}^{2} + \norm{\Tilde{W}(t)}_{F}^{2}\big) \leq -\frac{1}{\beta_{2}}V
\end{split}
\label{eq:Lyap_bound}
\end{align}

where, $\beta_{2} = max\big(\lambda_{max}(P), \lambda_{min}(\Gamma_{W})\big)$ 

Using \eqref{eq:Lyap_bound} in \eqref{eq:V_dot_part1}

\begin{align}
    \dot{V} \leq -\frac{\beta_{1}}{\beta_{2}}V + c_{W}
    \label{eq:V_dot_UUB}
\end{align}



From \textit{theorem 4.18} in \cite{khalil2002nonlinear}, the solution of the combined error system $[e^{T}(t) \Tilde{W}^{T}(t)]$ is UUB. 

\end{proof}

\begin{remark}
Th term involving $\hat{W}^{*}(t)$ in the above Lyapunov analysis was possible to tackle only because of the Proj$_{\Gamma}$(.) operator, i.e., $\norm{\hat{W}^{*}(t)}\leq \alpha^*+\epsilon^*$ $\forall t\in[t_0,\infty)$. The above analysis ensures a UUB result in a global sense in the tracking error space even though the nominal estimator dynamics $\dot{\hat{W}}^*(t)$ is perturbed by state-dependent disturbances ($\Delta_f(t)$ and $\Delta_{ff}(t)$). This is in contrast to \cite{pan2018composite}, which claims a semi-global result due to the presence of similar state-dependent disturbances.  Furthermore, the bound $c_{W}$ can be reduced by designing precise estimate of the nominal parameter, i.e., $\Tilde{W}^{*}(t)\approx 0$ implies $c_{W} \approx \sigma\norm{\delta_{w}(t)-\Gamma_{W}^{-1}\dot{\delta}_{w}(t)}_{F}^{2}$. Hence, the ultimate-bound would become only dependent on the disturbance bounds, unlike traditional $\sigma$-mod, where the ultimate-bound is also dependent on the parameter upper-bound $\overline{W}$.  

\end{remark}


\subsection{Ultimate Boundedness of $\tilde{W}^*(t)$}
We re-write \eqref{eq:y_star} as the following $\forall t\in[t_0+T_{IE},\infty)$. 
\begin{align}
    y^{*}(t) = \Gamma_{W^{*}}\big(\gamma_{1}C_{l}(t) + \gamma_{2}C_{ll}(t)+ \gamma_{3}C_{IE}(t)\big) 
    \label{eq:y_star_CIE}
\end{align}

\begin{theorem}

Based on the nominal parameter estimation update law in \eqref{eq:W_hat_star_dot} in the presence of Assumption \ref{ass:IE}, $\tilde{W}^{*}(t)$ is uniformly ultimately bounded (UUB).
\end{theorem}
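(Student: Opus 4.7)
The plan is to construct a dedicated Lyapunov function for the nominal parameter error $\tilde{W}^*(t)$ and exploit the IE condition to obtain a negative definite quadratic term in $\tilde{W}^*(t)$ that dominates the state-dependent disturbance terms arising from $\delta_W(t)$. I would take $V^*(\tilde{W}^*(t)) = \mathrm{Tr}\bigl(\tilde{W}^{*T}(t)\Gamma_{W^*}^{-1}\tilde{W}^*(t)\bigr)$. Since $W^*$ is constant, $\dot{\tilde{W}}^*(t) = \dot{\hat{W}}^*(t)$, and applying the standard projection inequality from \cite{lavretsky2011projection} gives $\dot{V}^* \le 2\,\mathrm{Tr}\bigl(\tilde{W}^{*T}(t)\,y^*(t)\bigr)$ for $t \ge t_0+T_{IE}$, where $y^*(t)$ is written as in \eqref{eq:y_star_CIE}.

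The first key step is to rewrite each of $C_l$, $C_{ll}$, $C_{IE}$ as a linear function of $\tilde{W}^*(t)$ plus a bounded perturbation. Using \eqref{eq:first_layer_filtered_W_star}, one obtains $C_l(t) = -\phi_f(t)\phi_f^T(t)\tilde{W}^*(t) + \phi_f(t)\Delta_f^T(t)$; using \eqref{eq:second_layer_filter_relation} and symmetry of $\Phi_{ff}$, $C_{ll}(t) = -\Phi_{ff}(t)\tilde{W}^*(t) + \Delta_{ff}^T(t)$ and $C_{IE}(t) = -\Phi_{ff}(T)\tilde{W}^*(t) + \Delta_{ff}^T(T)$. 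Substituting into $\dot{V}^*$ produces three non-positive quadratic terms $-\gamma_i\,\mathrm{Tr}(\tilde{W}^{*T}M_i\tilde{W}^*)$ with $M_1 = \phi_f\phi_f^T$, $M_2 = \Phi_{ff}(t)$, $M_3 = \Phi_{ff}(T)$, plus cross terms linear in $\tilde{W}^*$ and in the filtered disturbances $\Delta_f$, $\Delta_{ff}$.

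The second key step is where the IE condition earns its keep. Solving \eqref{eq:M_dot} explicitly gives $\Phi_{ff}(T) = \int_{t_0}^{T} e^{-p_{ff}(T-\tau)}\phi_f(\tau)\phi_f^T(\tau)\,d\tau \ge e^{-p_{ff}T_{IE}}\gamma_{IE}\,I_{n_w}$ by Assumption \ref{ass:IE}. Hence $-\gamma_3\,\mathrm{Tr}(\tilde{W}^{*T}\Phi_{ff}(T)\tilde{W}^*) \le -\gamma_3\, e^{-p_{ff}T_{IE}}\gamma_{IE}\,\|\tilde{W}^*(t)\|_F^2$, producing the coercive negative definite term in $\tilde{W}^*$ that is \emph{independent of time} for all $t\ge T$. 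The other two quadratic terms are merely discarded as non-positive.

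For the cross terms I would apply Young's inequality of the form $2\,\mathrm{Tr}(\tilde{W}^{*T}\Delta_{ff}^T(T))\le \eta\|\tilde{W}^*\|_F^2 + \eta^{-1}\|\Delta_{ff}(T)\|_F^2$, choosing $\eta$ small enough that the absorbed $\eta\|\tilde{W}^*\|_F^2$ is dominated by the IE-generated coefficient $\gamma_3 e^{-p_{ff}T_{IE}}\gamma_{IE}$. To certify that the residuals are bounded, I would invoke Theorem \ref{theorem:1}: since $e(t)$ is UUB and $x_m(t)$ is bounded (BIBO stability of the reference model), $x(t)$ is bounded; assuming $\phi$ is continuous this gives $\|\phi(x(t))\|$ bounded, and then \eqref{eq:Delta_dot} with a Hurwitz pole $-p_f$ and $\|\delta_W(t)\|\le\bar\delta_W$ yields $\|\Delta_f(t)\|\le \bar\Delta_f$. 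Cascading through \eqref{eq:delta_ff} gives $\|\Delta_{ff}(t)\|_F \le \bar\Delta_{ff}$. The resulting inequality takes the form $\dot{V}^* \le -\kappa V^* + c^*$ with $\kappa>0$ dictated by the gain selection above, which gives UUB of $\tilde{W}^*(t)$ by Theorem 4.18 of \cite{khalil2002nonlinear}. On the initial window $[t_0,t_0+T_{IE})$ where $s(t)=0$, boundedness of $\tilde{W}^*(t)$ follows directly from the projection operator.

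The main obstacle, as I see it, is verifying that the gains $\gamma_1,\gamma_2,\gamma_3$ and the splitting parameters in Young's inequality can be selected so that the IE-induced coercivity from $\Phi_{ff}(T)$ strictly dominates \emph{all} indefinite cross terms, including those inherited from the running filters $C_l$ and $C_{ll}$. Care is needed because the disturbance $\Delta_f(t)$ enters the cross term through $\phi_f(t)\Delta_f^T(t)$, whose norm is state dependent; one must ensure that the Young's-inequality split leaves a clean coefficient on $\|\tilde{W}^*\|_F^2$ while pushing the $\|\phi_f\|^2$ factor into the (bounded) residual, which is why invoking Theorem \ref{theorem:1} to pre-certify boundedness of the regressor is essential before closing the loop on the Lyapunov inequality.
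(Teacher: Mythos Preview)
Your proposal is correct and follows essentially the same route as the paper: the same Lyapunov function $V^*=\tfrac{1}{2}\mathrm{Tr}(\tilde W^{*T}\Gamma_{W^*}^{-1}\tilde W^*)$, the same decomposition of $C_l,C_{ll},C_{IE}$ via \eqref{eq:first_layer_filtered_W_star} and \eqref{eq:second_layer_filter_relation}, the same use of Theorem~\ref{theorem:1} to bound $\phi,\phi_f,\Delta_f,\Delta_{ff}$, and the same reliance on the IE condition to make $\Phi_{ff}(T)$ furnish the coercive term. The only cosmetic difference is that the paper bounds the cross terms directly as $c^*\|\tilde W^*\|_F$ and completes the square, which sidesteps the Young's-inequality gain-selection concern you flag in your last paragraph---no tuning of $\eta$ is actually needed.
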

\begin{proof}
The Lyapunov candidate is defined as 
\begin{align}
    V^{*} = \frac{1}{2}Tr(\tilde{W}^{*T}(t)\Gamma_{W^{*}}^{-1}\tilde{W}^{*}(t))
    \label{eq:V_star}
\end{align}

Taking time-derivative along system trajectories $\forall t\in[t_0+T_{IE},\infty)$ yields 
\begin{align}
    \dot{V}^{*} =& -\gamma_{1}Tr\big(\tilde{W}^{*T}(t)(\phi_{f}(t)\phi_{f}^{T}(t)\tilde{W}^{*}(t) - \phi_{f}(t)\Delta_{f}^{T}(t))\big) \nonumber
    \\
    & -\gamma_{2}Tr\big(\tilde{W}^{*T}(t)(\Tilde{W}^{*T}(t)\Phi_{ff}(t) -  \Delta_{ff}(t))^{T}\big) \nonumber    
    \\
    & -\gamma_{3}Tr\big(\tilde{W}^{*T}(t)(\Tilde{W}^{*T}(t)\Phi_{ff}(T) -  \Delta_{ff}(T))^{T}\big) \nonumber
\end{align}
Using \eqref{eq:phi_f_dot}, \eqref{eq:Delta_dot}, \eqref{eq:delta_ff} and bounding $\norm{\phi(x(t))} \leq \bar{\phi}$ as $x(t) \in \mathcal{L}_{\infty}$ from Theorem \ref{theorem:1}, we get the following bounds on the undesirable terms: $\norm{\phi_{f}(t)} \leq \frac{\bar{\phi}}{p_{f}}, \, \norm{\Delta_{f}(t)} \leq \frac{\bar{\delta}_{W}\bar{\phi}}{p_{f}}, \, \norm{\Delta_{ff}(t)} \leq \frac{\bar{\delta}_{W}\bar{\phi}^{2}}{p_{f}p_{ff}}$, which gives
\begin{align}
    \dot{V}^{*} \leq & -Tr\big(\tilde{W}^{*T}(t)(\gamma_{2}\Phi_{ff}(t) + \gamma_{3}\Phi_{ff}(T))\tilde{W}^{*}(t)\big) \nonumber
    \\
    & +\norm{\tilde{W}^{*}(t)}_{F}\underbrace{\big(\gamma_{1}\frac{\bar{\delta}_{W}\bar{\phi}^{2}}{p_{f}p_{f}} + (\gamma_{2}+\gamma_{3})\frac{\bar{\delta}_{W}\bar{\phi}^{2}}{p_{f}p_{ff}}\big)}_{c^{*}} \label{eq:c_star}
    \\
     & \text{ as $\Phi_{ff}(T)$ is full rank from Assumption \ref{ass:IE}} \nonumber
    \\
    \leq & - \underbrace{\gamma_{3}\lambda_{\min}(\Phi_{ff}(T))}_{\beta_{1}^{*}}\norm{\tilde{W}^{*}(t)}^{2}_{F} + c^{*}\norm{\tilde{W}^{*}(t)}_{F} \label{eq:bound_on_Wtilde_star}
    \\
    = & -\frac{\beta_{1}^{*}}{2}\norm{\tilde{W}^{*}(t)}^{2}_{F} -\big(\sqrt{\frac{\beta_{1}^{*}}{2}}\norm{\tilde{W}^{*}(t)}_{F} - \frac{c^{*}}{\sqrt{2\beta_{1}^{*}}}\big)^{2} \nonumber
    \\
    & + \frac{c^{*2}}{2\beta_{1}^{*}} \nonumber
    \\
    \leq & -\frac{\beta_{1}^{*}}{2}\norm{\tilde{W}^{*}(t)}^{2}_{F} + \frac{c^{*2}}{2\beta_{1}^{*}} 
    \label{eq:lyap_star_dot}
\end{align}

From \eqref{eq:V_star}, we get
\begin{align}
    V^{*} \leq \frac{\lambda_{\max}(\Gamma_{W^{*}}^{-1})}{2}\norm{\tilde{W}^{*}(t)}^{2}_{F}
    \label{eq:V_star_bound}
\end{align}

Substituting \eqref{eq:V_star_bound} in \eqref{eq:lyap_star_dot} and putting $\beta_{2}^{*} = \lambda_{\max}(\Gamma_{W^{*}}^{-1})$, we finally get
\begin{align}
    \dot{V}^{*} \leq -\frac{\beta_{1}^{*}}{\beta_{2}^{*}}V^{*} + \frac{c^{*2}}{2\beta_{1}^{*}} 
\end{align}
From \textit{theorem 4.18} in \cite{khalil2002nonlinear}, the solution of the error system $\Tilde{W}^{*}(t)$ is UUB. 
\end{proof}

\begin{remark} 
From \eqref{eq:bound_on_Wtilde_star}, $\tilde{W}^{*}(t) \in \mathcal{L}_{\infty}$ with an ultimate-bound $\frac{c^*}{\beta_{1}^{*}}$. Using the Projection operation, $\norm{\tilde{W}^{*}(t)}\leq 2\alpha^{*} + \epsilon^{*}$. Hence, the IE condition will dictate a smaller upper bound of $\norm{\tilde{W}^{*}(t)}$ when
\begin{align}
    \norm{\tilde{W}^{*}(t)} \leq \frac{c^{*}}{\gamma_{3}\lambda_{\min}(\Phi_{ff}(T))} \leq 2\alpha^{*} + \epsilon^{*}
    \label{eq:CIE_condition}
\end{align}
Therefore, using \eqref{eq:c_star} and \eqref{eq:CIE_condition} the following upper-bound on $\Bar{\delta}_{w}$ would have a guaranteed advantageous for IE-based design.  
\begin{align}
    \bar{\delta}_{W} \leq \frac{\gamma_{3}(2\alpha^{*} + \epsilon^{*})\lambda_{\min}(\Phi_{ff}(T))p_{f}}{\bar{\phi}^{2}}\big(\frac{1}{\frac{\gamma_{1}}{p_{f}} + \frac{\gamma_{2} + \gamma_{3}}{p_{ff}}} \big)
    \label{eq:delta_w_bound}
\end{align}
It can be inferred that sufficient degree of excitation $\gamma_{IE}$, which directly affects the magnitude of $\lambda_{\min}(\Phi_{ff}(T))$, ensures satisfaction of the above inequality. 
\end{remark}

Note that this also provides a smaller upper bound on $W_{r}(t)$ in Theorem \ref{theorem:1}.

\begin{remark}
Our proposed formulation of dual adaptation has structural similarity with recent development in higher-order adaptive control formulation \cite{gaudio2019provably, boffi2019higher}, where there are two update laws consisting of a surrogate variable and the actual parameter estimator. The surrogate variable updates according to the actual parameter estimator law from traditional MRAC while the actual parameter estimator chases this surrogate variable. In the proposed method, the total estimator $\hat{W}(t)$ is analogously chasing the nominal estimator $\hat{W}^*(t)$. However, a detailed analytical comparison between these two techniques is yet to be explored and will be considered in future research.
\end{remark}
\subsection{Performance Recovery in Disturbance-free Scenario $\delta_W(t)=0$}

Consider the following result as a special case having time-invariant system, i.e., $W(t)=W^*$ with no perturbation term $\delta_{w}(t)$. 
\begin{theorem}
Using the system dynamics in \eqref{eq:x_dot}, controller design in \eqref{eq:controller}, the unknown parameter $W(t) = W^{*}$, the update laws in \eqref{eq:W_hat_dot} and \eqref{eq:W_hat_star_dot} guarantees exponential convergence in the extended state space $[e^{T}(t), \tilde{W}^{T}(t)]^{T}$, provided the IE condition in Assumption \ref{ass:IE} is satisfied.
\end{theorem}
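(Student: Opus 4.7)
The plan is to specialize the analyses of Theorems 1 and 2 to the disturbance-free case and stitch them together via a single composite Lyapunov function. Since $\delta_W(t)\equiv 0$, several undesirable terms collapse to zero: $\dot{W}(t)=0$ eliminates the $\dot{\delta}_W$ mismatch in the $\tilde W$-analysis, and because $\Delta_f(t_0)=0,\ \Delta_{ff}(t_0)=0$ with inputs proportional to $\delta_W$, the filtered perturbation signals satisfy $\Delta_f(t)\equiv 0$ and $\Delta_{ff}(t)\equiv 0$. Consequently $C_l(t)=-\phi_f\phi_f^T\tilde W^*$, $C_{ll}(t)=-\Phi_{ff}(t)\tilde W^*$, and $C_{IE}(t)=-\Phi_{ff}(T)\tilde W^*$, so the secondary estimator is driven by purely dissipative terms in $\tilde W^*$.

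The plan is to choose the composite Lyapunov function
\begin{align*}
V_c(t)=e^T(t)Pe(t)+\mathrm{Tr}\!\bigl(\tilde W^T(t)\Gamma_W^{-1}\tilde W(t)\bigr)+\mu\,\mathrm{Tr}\!\bigl(\tilde W^{*T}(t)\Gamma_{W^*}^{-1}\tilde W^*(t)\bigr),
\end{align*}
with $\mu>0$ a scalar to be fixed later. For $t\in[t_0+T_{IE},\infty)$, replaying the derivation leading to \eqref{eq:lw_tilde_bound} with $\dot\delta_W=0$ and $W(t)=W^*$, and observing that $\hat W-\hat W^*=\tilde W-\tilde W^*$, gives $\frac{d}{dt}\bigl[e^TPe+\mathrm{Tr}(\tilde W^T\Gamma_W^{-1}\tilde W)\bigr]\le -e^TQ_me-2\sigma\|\tilde W\|_F^2+2\sigma\,\mathrm{Tr}(\tilde W^T\tilde W^*)$. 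Likewise, retracing the steps leading to \eqref{eq:bound_on_Wtilde_star} but with $c^*=0$ yields $\frac{d}{dt}\,\mathrm{Tr}\!\bigl(\tilde W^{*T}\Gamma_{W^*}^{-1}\tilde W^*\bigr)\le -2\gamma_3\lambda_{\min}(\Phi_{ff}(T))\|\tilde W^*\|_F^2$, discarding the non-positive $C_l$ and $C_{ll}$ contributions. Applying Young's inequality $2\sigma\,\mathrm{Tr}(\tilde W^T\tilde W^*)\le \sigma\|\tilde W\|_F^2+\sigma\|\tilde W^*\|_F^2$ and selecting $\mu$ so that $2\mu\gamma_3\lambda_{\min}(\Phi_{ff}(T))>\sigma$ delivers
\begin{align*}
\dot V_c\le -\lambda_{\min}(Q_m)\|e\|^2-\sigma\|\tilde W\|_F^2-\eta\|\tilde W^*\|_F^2,
\end{align*}
for some $\eta>0$, which combined with the quadratic upper bound on $V_c$ gives $\dot V_c\le -\kappa V_c$, i.e., global exponential convergence of the extended state $[e^T,\tilde W^T,\tilde W^{*T}]^T$ to zero, and hence in particular of $[e^T,\tilde W^T]^T$.

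For the transient interval $[t_0,t_0+T_{IE}]$ the switching signal $s(t)=0$ deactivates the IE-term, but $C_l$ and $C_{ll}$ still contribute non-positive quadratic terms in $\tilde W^*$, and Theorem 1 already guarantees boundedness of $e(t)$ and $\tilde W(t)$ on this finite window. Since this window has fixed length $T_{IE}$, any bounded growth over it can be absorbed into the initial condition and the subsequent exponential decay yields an overall exponential bound.

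The principal obstacle is handling the $\sigma$-modification cross-term $\mathrm{Tr}(\tilde W^T\tilde W^*)$, which couples the two estimator error dynamics and is precisely what prevented a purely negative-definite bound in Theorem 1. Here it is neutralized by exploiting the fact that $\lambda_{\min}(\Phi_{ff}(T))>0$ under Assumption \ref{ass:IE} to produce a \emph{constant} quadratic dissipation in $\tilde W^*$, so that a sufficiently large weighting $\mu$ in the Lyapunov function dominates the cross-term without imposing any restriction on the original controller gains. A secondary care-point is to argue that $\Phi_{ff}(T)$ is genuinely positive definite under the IE condition—this is inherited from Assumption \ref{ass:IE} together with the non-expansive action of the second-layer filter \eqref{eq:M_dot}, as already invoked in the proof of Theorem 2.
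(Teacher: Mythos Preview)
Your proposal is correct but follows a genuinely different route from the paper. The paper uses a \emph{cascade} analysis: it keeps the two Lyapunov functions $V=e^TPe+\mathrm{Tr}(\tilde W^T\Gamma_W^{-1}\tilde W)$ and $V^*=\tfrac12\mathrm{Tr}(\tilde W^{*T}\Gamma_{W^*}^{-1}\tilde W^*)$ separate, first deriving $\dot V^*\le -c_\Omega^* V^*$ (hence $V^*(t)\le e^{-c_\Omega^*t}V^*(T)$), then substituting this exponential decay into the inhomogeneous inequality $\dot V\le -(\beta_1/\beta_2)V+c_W^*V^*$ and integrating to obtain $V(t)$ as a sum of two decaying exponentials. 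You instead form a \emph{single} weighted composite Lyapunov function with a free scalar $\mu$, apply Young's inequality to the cross term, and pick $\mu$ large enough (an analysis parameter, not a controller gain) so that the $+\sigma\|\tilde W^*\|_F^2$ residue is absorbed, yielding $\dot V_c\le -\kappa V_c$ directly. Your approach is more compact and avoids the explicit convolution/integration step; the paper's cascade argument has the minor advantage of keeping the two subsystem rates $\beta_1/\beta_2$ and $c_\Omega^*$ explicitly visible in the final bound, which can be informative for tuning.
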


\begin{proof}
The derivative along system trajectories of \eqref{eq:V_candidate} with $W(t) = W^{*}$ gives

\begin{align}
    \dot{V} = &  e^{T}(t)(A_{m}^{T}P + PA_{m})e(t) - 2e^{T}(t)PB\Tilde{W}^{T}(t)\phi(x(t))  \nonumber
    \\
    & + 2Tr\Big(\Tilde{W}^{T}(t)\big(\phi(x(t))e^{T}(t)PB - \sigma(\hat{W}(t)-\hat{W}^{*}(t))\big) \nonumber
    \\
    = & -e^{T}(t)Qe(t) - 2\sigma Tr\big(\Tilde{W}^{T}(t)\tilde{W}(t)\big) \nonumber
    \\
    & + 2\sigma Tr\big(\Tilde{W}^{T}(t)\tilde{W}^{*}(t))\big) \nonumber
\end{align}
Using $2Tr\big(\Tilde{W}^{T}(t)\tilde{W}^{*}(t))\big) \leq \norm{\tilde{W}(t)}^{2}_{F} + \norm{\tilde{W}^{*}(t)}^{2}_{F}$
\begin{align}
    \dot{V} \leq & -\lambda_{\min}(Q)\norm{e(t)}^{2} - \sigma\norm{\tilde{W}(t)}_{F}^{2} + \sigma\norm{\tilde{W}^{*}(t)}_{F}^{2} \nonumber
    \\
    \leq & -\lambda_{\min}(Q)\norm{e(t)}^{2} - \sigma\norm{\tilde{W}(t)}_{F}^{2} + \underbrace{\sigma \lambda_{\max}(\Gamma_{W^{*}})}_{c_{W}^{*}}V^{*} \nonumber
    \\
    \leq & -\frac{\beta_{1}}{\beta_{2}}V + c_{W}^{*}V^{*}
\label{eq:V_dot_cororllary}
\end{align}
where, $\beta_{1}, \beta_{2}$ are same as in \eqref{eq:V_dot_UUB}, $c^{*}_{W} > 0$ and $V^{*}(t) $ is from \eqref{eq:V_star}.


Analysing the convergence rate of $V^{*}(t)$ when $W(t) = W^{*}$, we remove the time-varying term from \eqref{eq:lyap_star_dot} ($c^{*}$ becomes $0$), to get
\begin{align}
\begin{split}
        & \dot{V}^{*} \leq  -\underbrace{\frac{2\beta_{1}^{*}}{\beta_{2}^{*}}}_{c_{\Omega}^{*}}V^{*}  
    \\
    \implies & V^{*} \leq \exp(-c_{\Omega}^{*}t)V^{*}(T) \, \forall t\geq T
\end{split}
    \label{eq:V_star_dot_no_time_varying}
\end{align}


where, $c_{\Omega}^{*} \in \mathbb{R}_{>0}$.

Finally, integrating \eqref{eq:V_dot_cororllary} and replacing in \eqref{eq:V_star_dot_no_time_varying} we get, (showing explicit time dependence for clarity) 
\begin{align}
        V(t) \leq & \exp\{-\frac{\beta_{1}}{\beta_{2}}(t - T)\}V(T) + \nonumber
        \\
        & c_{W}^{*}\int_{T}^{t}\exp\{-\frac{\beta_{1}}{\beta_{2}}(t - \tau)- c_{\Omega}^{*}\tau\}d\tau V^{*}(T) 
        \\
        = & \exp\{-\frac{\beta_{1}}{\beta_{2}}(t - T)\}V(T) + \frac{c_{W}^{*}}{c_{\Omega}}\exp\{-c_{\Omega}^{*}t\}V^{*}(T) \nonumber
        \\
        & \underbrace{-c_{W}^{*}\exp\{-\frac{\beta_{1}}{\beta_{2}}t\}\exp\{c_{\Omega}T\}V^{*}(T)}_{\leq 0} \nonumber        
        \\
        \leq &  \exp\{-\frac{\beta_{1}}{\beta_{2}}(t - T)\}V(T) + \frac{c_{W}^{*}}{c_{\Omega}}\exp\{-c_{\Omega}^{*}t\}V^{*}(T) \nonumber
    \label{eq:V_corollary_final1}
\end{align}
where, $c_{\Omega} = \frac{\beta_{1}}{\beta_{2}}- c_{\Omega}^{*}$ and $V(t) \to 0 $ as $t \to \infty$.



\end{proof}

\begin{remark}
The above Theorem is a crucial feature of the proposed scheme entailing performance recovery in the disturbance free case. The result indicates that dual adaptation mechanism can ensure parameter convergence ($\tilde{W}(t)\to 0$ and $\tilde{W}^*(t)\to 0$) under the IE condition when $\delta_{W}(t)=0$ and thereby also invoke exponential rate of convergence for tracking error $e(t)$. The proposed algorithm successfully unifies the cases of time-varying and time-invariant parameter using the dual adaptation principle unlike the recent result \cite{gaudio2021parameter}.
\end{remark}

\section{Conclusion}
\label{sec:conclusion}
In this paper we design a composite adaptive controller using dual adaptation technique for time-varying dynamical systems. The novel dual adaptation mechanism can efficiently deal with the time-varying nature of the parameters. A combined approach using Projection and $\sigma$-modification algorithms is conceptualized while exploiting congelation of variables to claim a global result for the tracking error space. Unlike the classical adaptive systems requiring the restrictive PE condition for accurate parameter estimation, the proposed work builds on the milder IE condition. A rigorous Lyapunov stability analysis is performed to ensure UUB stability of the closed-loop system. Moreover, the proposed algorithm can recover the performance of previously designed IE-based adaptive controller in case of plants having constant parameters. 

\appendices




\ifCLASSOPTIONcaptionsoff
  \newpage
\fi



%
\bibliography{ref}
\bibliographystyle{ieeetr}




%








\end{document}